\title{A model of navigation history}
\author{Connor G. Brewster \and Alan Jeffrey}
\date{August, 2016}
\newcommand{\Verts}{V}
\newcommand{\aVert}{v}
\newcommand{\bVert}{w}
\newcommand{\aNH}{H}
\newcommand{\Docs}{D}
\newcommand{\Active}{A}
\newcommand{\FullyActive}{F\!A}
\newcommand{\parentOf}{\rightarrow}
\newcommand{\parentOfActive}{\twoheadrightarrow}
\newcommand{\childOf}{\leftarrow}
\newcommand{\leChron}{\le}
\newcommand{\ltChron}{<}
\newcommand{\gtChron}{>}
\newcommand{\eqSess}{\sim}
\newcommand{\ltSess}{\lesssim}
\newcommand{\gtSess}{\gtrsim}
\newcommand{\rootDoc}{d_0}
\newcommand{\aDoc}{d}
\newcommand{\bDoc}{e}
\newcommand{\cDoc}{f}
\newcommand{\st}{\mathbin.}
\newtheorem{definition}{Definition}
\newtheorem{theorem}{Theorem}
\newtheorem{patch}{Patch}
\newtheorem{counterexample}{Counterexample}
\newcommand{\QED}{\hfill$\Box$}
\tikzstyle{doc} = [draw=black, fill=blue!10, circle, font={\normalfont\sffamily}]
\tikzstyle{fully} = [draw=red, thick]
\tikzstyle{active} = [color=white, fill=blue!50!black]
\tikzstyle{jshactive} = [active] 
\begin{document}

\maketitle

\subparagraph{Abstract:}
Navigation has been a core component of the web since its inception:
users and scripts can follow hyperlinks, and can go back or forwards
through the navigation history. In this paper, we present a formal
model aligned with the \textsc{whatwg} specification of navigation
history, and investigate its properties. The fundamental property of
navigation history is that traversing the history by $\delta$ then by
$\delta'$ should be the same as traversing by $\delta+\delta'$. In
particular, traversing by $+1$ (forward) then by $-1$ (back) is the
same as traversing by $0$ (doing nothing). We show that the
specification-aligned model does not satisfy this property, by
exhibiting a series of counter-examples, which motivate four patches
to the model. We present a series of experiments, showing that
browsers are inconsistent in their implementation of navigation
history, but that their behaviour is closer to the patched model than
to the specification-aligned model. We propose patches to the
specification to align it with the patched model.

\subparagraph{ACM Classification:}
D.2.1 Requirements/Specifications.

\subparagraph{Keywords:}
Formal model,
Navigation,
Session history,
Specification,
Web browsers.

\section{Introduction}

Navigation has been a core component of the web since its inception:
users and scripts can follow hyperlinks, and can go back or forwards
through the navigation history. Users are exposed to this functionality
through following hyperlinks, and by the forward and back buttons.
Scripts have many ways of accessing session history, via the
navigation API~\cite[\S7.7]{whatwg} and the \verb|element.click()| method.

Prior formalizations of navigation history
include~\cite{HH:2006,Haydar:2004,HPS:2004,LHYT:2000,WP:2003}, which
predate and are not aligned with the \textsc{whatwg}
specification~\cite{whatwg}.  The specification of the navigation API
is informal, and has complex dependencies on the rest of the HTML
specification. There is little discussion of the goals
of the API, and an unclear alignment with browser implementations.

In this paper, we present a formal model of navigation, aligned with
the HTML specification, and investigate its properties. The
starting point is that there is a total order of
\emph{documents}\footnote{%
  We are eliding some of the technical details of the specification here,
  in particular we are conflating a \emph{browsing context}
  with the document it contains, and we are ignoring issues around
  document loading and unloading, and around the current entry of the joint
  session history.
}, one of which is \emph{active}, for example:
\[\begin{tikzpicture}
  \node[doc](0) at (0,0){0};
  \node[doc](1) at (1,0){1};
  \node[doc,jshactive,fully](2) at (2,0){2};
  \node[draw,dotted,fit=(0)(1)(2)] {};
\end{tikzpicture}\]
In diagrams, we use left-to-right order to indicate order,
and highlight the active document. The user can \emph{traverse}
the history which changes the active document, for example pressing
the back button:
\[\begin{tikzpicture}
  \node[doc](0) at (0,0){0};
  \node[doc,jshactive,fully](1) at (1,0){1};
  \node[doc](2) at (2,0){2};
  \node[draw,dotted,fit=(0)(1)(2)] {};
\end{tikzpicture}\]
The user can also \emph{navigate}, which replaces any document
after the currently active document by a fresh active document:
\[\begin{tikzpicture}
  \node[doc](0) at (0,0){0};
  \node[doc](1) at (1,0){1};
  \node[doc,jshactive,fully](3) at (3,0){3};
  \node[draw,dotted,fit=(0)(1)(3)] {};
\end{tikzpicture}\]
Users can also traverse the history by more than one document
at a time, for example by using a pull-down menu from the back
or forwards button. This is called \emph{traversing by $\delta$},
for instance we can traverse our running example by $-2$
to get:
\[\begin{tikzpicture}
  \node[doc,jshactive,fully](0) at (0,0){0};
  \node[doc](1) at (1,0){1};
  \node[doc](3) at (3,0){3};
  \node[draw,dotted,fit=(0)(1)(3)] {};
\end{tikzpicture}\]
We formalize the notions of traversal and navigation in
\S\ref{sec:model}, and show the \emph{fundamental property of traversal}:
that traversing by $\delta$ then by $\delta'$
is the same as traversing by $\delta+\delta'$.

So far, the model is refreshingly simple, and corresponds well to
the specification and to browser implementations. Where the problems
arise is in the \emph{hierarchical} nature of documents. HTML
documents can contain \verb|iframe| elements, which
are independent documents in their own right, often
used to embed third party content such as advertisements.
We can treat each document as a tree, for example:
\[\begin{tikzpicture}
  \node[doc,jshactive,fully](0) at (0,0){0};
  \node[doc,active,fully](1) at (1,-1){1};
  \node[doc,active,fully](2) at (2,-2){2};
  \node[doc,active,fully](3) at (3,0){3};
  \node[draw,dotted,fit=(0)] {};
  \node[draw,dotted,fit=(1)] {};
  \node[draw,dotted,fit=(2)] {};
  \node[draw,dotted,fit=(3)] {};
  \draw[->](0)--(1);
  \draw[->](1)--(2);
  \draw[->](0)--(3);
\end{tikzpicture}\]
The problem comes from the ability of each document to
navigate separately and maintain its own session history,
but that traversal is a global operation that operates
on the \emph{joint session history}. For example
if document $2$ in the previous example navigates, the
resulting state is:
\[\begin{tikzpicture}
  \node[doc,active,fully](0) at (0,0){0};
  \node[doc,active,fully](1) at (1,-1){1};
  \node[doc](2) at (2,-2){2};
  \node[doc,active,fully](3) at (3,0){3};
  \node[doc,jshactive,fully](4) at (4,-2){4};
  \node[draw,dotted,fit=(0)] {};
  \node[draw,dotted,fit=(1)] {};
  \node[draw,dotted,fit=(2)(4)] {};
  \node[draw,dotted,fit=(3)] {};
  \draw[->](0)--(1);
  \draw[->](1)--(4);
  \draw[->](0)--(3);
\end{tikzpicture}\]
and then if document $1$ navigates, the state is:
\[\begin{tikzpicture}
  \node[doc,active,fully](0) at (0,0){0};
  \node[doc](1) at (1,-1){1};
  \node[doc](2) at (2,-2){2};
  \node[doc,active,fully](3) at (3,0){3};
  \node[doc,active](4) at (4,-2){4};
  \node[doc,jshactive,fully](5) at (5,-1){5};
  \node[draw,dotted,fit=(0)] {};
  \node[draw,dotted,fit=(1)(5)] {};
  \node[draw,dotted,fit=(2)(4)] {};
  \node[draw,dotted,fit=(3)] {};
  \draw[->](0)--(5);
  \draw[->](1)--(4);
  \draw[->](0)--(3);
\end{tikzpicture}\]
Note that node $4$ here is in an unusual state: it is active, but has
an inactive ancestor. The specification~\cite[\S7.7]{whatwg}
distinguishes between \emph{active} documents such as $4$, and
\emph{fully active} documents such as $0$, $3$ and $5$. Active
documents can become fully active by traversals involving their
ancestors. For example, after traversing by $-1$, document $4$ is
fully active:
\[\begin{tikzpicture}
  \node[doc,active,fully](0) at (0,0){0};
  \node[doc,jshactive,fully](1) at (1,-1){1};
  \node[doc](2) at (2,-2){2};
  \node[doc,active,fully](3) at (3,0){3};
  \node[doc,active,fully](4) at (4,-2){4};
  \node[doc](5) at (5,-1){5};
  \node[draw,dotted,fit=(0)] {};
  \node[draw,dotted,fit=(1)(5)] {};
  \node[draw,dotted,fit=(2)(4)] {};
  \node[draw,dotted,fit=(3)] {};
  \draw[->](0)--(1);
  \draw[->](1)--(4);
  \draw[->](0)--(3);
\end{tikzpicture}\]
As even a simple example like this shows, the combination of features
quickly results in a complex mix of session history, ordering, and
document hierarchy, which leads to the problems:
\begin{itemize}

\item \emph{Formally} there is no simple model,
  and the model provided by the specification does
  not satisfy the traverse-then-traverse property.

\item \emph{Experimentally} the browsers disagree
  with each other, and with the HTML specification,
  about the semantics of navigation.

\end{itemize}
In this paper, we address these:
\begin{itemize}

\item \S\ref{sec:model} provides a formal model of navigation history,
  which is intended to align with the specification. We show, through
  a series of examples, that it does not satisfy the
  fundamental property, and give patches to the model for
  each example. The final model does satisfy the
  fundamental property.

\item \S\ref{sec:experiments} investigates how well the patched
  model aligns with existing browser implementations. We show
  ways in which the browsers exhibit behaviours which are not
  aligned with the specification, and discuss how our proposed
  model matches these behaviours.

\end{itemize}
Finally, we propose changed wording to the specification, which
would bring it in line with our patched model.

\section{Model}
\label{sec:model}

In this section, we present our formal model of navigation history.
\S\ref{sec:preliminaries} contains definitions of concepts such as
tree and total order, and may be skipped by most readers. The model,
together with some examples, is given in \S\ref{sec:defns}. In
\S\ref{sec:properties} we define the fundamental property of
traversal, show that the model does \emph{not} satisfy
it, but can be patched to do so.

\subsection{Preliminaries}
\label{sec:preliminaries}

A \emph{directed graph} $G=(\Verts,{\parentOf})$ consists of:
\begin{itemize}
\item a set $\Verts$ (the \emph{vertices}), and
\item a relation ${\parentOf} \subseteq (\Verts\times\Verts)$ (the \emph{edges}).
\end{itemize}
The \emph{transitive closure} of $\parentOf$ is defined as $\aVert\parentOf^+\aVert'$ whenever
there exists $\aVert_0,\ldots,\aVert_n$ such that:
\[
  \aVert=\aVert_0\parentOf\cdots\parentOf\aVert_n=\aVert'
\]
The \emph{reflexive transitive closure} of $\parentOf$ is defined as $\aVert\parentOf^*\aVert'$ whenever
$\aVert\parentOf^+\aVert'$ or $\aVert=\aVert'$.
A \emph{forest} is a directed graph where:
\begin{itemize}
\item there is no $\aVert$ such that $\aVert\parentOf^+\aVert$ (\emph{acyclicity})
\item if $\aVert\parentOf\aVert'\childOf\aVert''$ then $\aVert=\aVert''$ (\emph{at most one parent}).
\end{itemize}
A \emph{root vertex} of a forest is a vertex $\aVert$ such that there is no $\bVert\parentOf\aVert$.
A \emph{tree} is a forest with a unique root vertex.
A \emph{preorder} is a directed graph $(\Verts, {\le})$ such that:
\begin{itemize}
\item every $\aVert$ has $\aVert\le\aVert$ (\emph{reflexivity}), and
\item if $\aVert\le\aVert'\le\aVert''$ then $\aVert\le\aVert''$ (\emph{transitivity}).
\end{itemize}
A \emph{partial order} is a preorder such that:
\begin{itemize}
\item for every $\aVert$ and $\aVert'$, if $\aVert\le\aVert'$ and $\aVert'\le\aVert$ then $\aVert=\aVert'$
  (\emph{antisymmetry}).
\end{itemize}
A \emph{total order} is a partial order such that:
\begin{itemize}
\item for every $\aVert$ and $\aVert'$, either $\aVert\le\aVert'$ or $\aVert\ge\aVert'$ (\emph{totality}).
\end{itemize}
A \emph{equivalence} is a preorder $(\Verts,{\sim})$ such that:
\begin{itemize}
\item if $\aVert\sim\aVert'$ then $\aVert'\sim\aVert$ (\emph{symmetry}).
\end{itemize}

\subsection{Definitions}
\label{sec:defns}

We can now formalize our model of navigation history, together with
the operations of navigation and traversal. This formalizes the
navigation history specification~\cite{whatwg}, and has a pleasant
diagrammatic presentation, but as we shall see in
\S\ref{sec:properties}, it has unexpected properties.

\begin{definition}[Navigation history]
A \emph{navigation history} $\aNH=(\Docs,\Active,{\parentOf},{\leChron},{\eqSess})$ consists of:
\begin{itemize}
\item a finite set $\Docs$ (the \emph{documents}),
\item a subset $\Active \subseteq \Docs$ (the \emph{active} documents),
\item a forest $(\Docs,{\parentOf})$ (the \emph{child} relationship),
\item a total order $(\Docs,{\leChron})$ (the \emph{chronological} order), and
\item an equivalence relation $(\Docs,{\eqSess})$ (the \emph{same-session} equivalence).
\end{itemize}
such that:
\begin{itemize}
\item for every $\aDoc$ there is a unique $\aDoc'\in\Active$ such that $\aDoc \eqSess \aDoc'$,
\item for every $\aDoc \parentOf \bDoc \eqSess \bDoc'$
  we have $\aDoc \parentOf \bDoc'$, and
\item for every $\aDoc \parentOf \bDoc$, we have $\aDoc \leChron \bDoc$.
  \QED
\end{itemize}
\end{definition}
We present such navigation histories ad diagrams, using
left-to-right position for chronological order, and grouping documents
in the same session. Since documents in the same session must have the
same parent, we only draw the document hierarchy for active children.
For example the diagram:
\[\begin{tikzpicture}
  \node[doc,active,fully](0) at (0,0){0};
  \node[doc,jshactive,fully](1) at (1,-1){1};
  \node[doc](2) at (2,-2){2};
  \node[doc,active,fully](3) at (3,0){3};
  \node[doc,active,fully](4) at (4,-2){4};
  \node[doc](5) at (5,-1){5};
  \node[draw,dotted,fit=(0)] {};
  \node[draw,dotted,fit=(1)(5)] {};
  \node[draw,dotted,fit=(2)(4)] {};
  \node[draw,dotted,fit=(3)] {};
  \draw[->](0)--(1);
  \draw[->](1)--(4);
  \draw[->](0)--(3);
\end{tikzpicture}\]
represents:
\[\begin{array}{l}
  D = \{ 0,1,2,3,4,5 \} \\[\jot]
  A = \{ 0,1,3,4 \} \\[\jot]
  0 \parentOf 1 \quad 0 \parentOf 3 \quad 0 \parentOf 5 \quad 1 \parentOf 2 \quad 1 \parentOf 4 \\[\jot]
  0 \leChron 1 \leChron 2 \leChron 3 \leChron 4 \leChron 5 \\[\jot]
  1 \eqSess 5 \quad 2 \eqSess 4
\end{array}\]
In such a navigation history, we define:
\begin{itemize}
\item $\rootDoc$ is the unique active root document,
\item $\aDoc \parentOfActive \bDoc$ when $\aDoc \parentOf \bDoc$ and $\bDoc \in \Active$
  (the \emph{active child} relationship),
\item $\FullyActive = \{ \aDoc \mid \rootDoc \parentOfActive^* \aDoc \}$
  (the \emph{fully active} documents),
\item $\aDoc \ltSess \bDoc$ whenever $\aDoc \eqSess \bDoc$ and $\aDoc \ltChron \bDoc$,
\item the \emph{session future} of $\aDoc$ is $\{ \bDoc \mid \aDoc \ltSess \bDoc \}$,
\item the \emph{session past} of $\aDoc$ is $\{ \bDoc \mid \aDoc \gtSess \bDoc \}$,
\item the \emph{joint session future} is $\{ \bDoc \mid \exists \aDoc \in \FullyActive \st \aDoc \ltSess \bDoc \}$,
\item the \emph{joint session past} is $\{ \bDoc \mid \exists \aDoc \in \FullyActive \st \aDoc \gtSess \bDoc \}$,
\end{itemize}
These definitions are intended to align with the specification, for example
\cite[7.7.2]{whatwg} has the definition:
\begin{quote}
  The \textbf{joint session history} of a top-level browsing context is the
  union of all the session histories of all browsing contexts of all
  the fully active \verb|Document| objects that share that top-level browsing
  context, with all the entries that are current entries in their
  respective session histories removed except for the current entry of
  the joint session history.
\end{quote}
which (eliding the concept of ``current entry of the joint session history'')
corresponds to the above definitions of joint session future and past.
We now consider how to formalize operations on navigation histories.
staring with \emph{navigation}, which is triggered by following hyperlinks,
or other actions which trigger document loading.

\begin{definition}[Navigation]
Define \emph{deleting $\aDoc$ from $\aNH$}, when $\aDoc$ is in the joint session future
to be $\aNH'=(\Docs',\Active,{\leChron},{\parentOf},{\eqSess})$ where:
\begin{itemize}
\item $\Docs' = \Docs \setminus \{ \bDoc \mid \aDoc\parentOf^* \bDoc \}$.
\end{itemize}
Define \emph{replacing $\aDoc$ by $\aDoc'$ in $\aNH$}, where $\aDoc\in\FullyActive$ and
$\aDoc'\notin\Docs$,
to be $\aNH'=(\Docs',\Active',{\leChron'},{\parentOf'},{\eqSess'})$ where:
\begin{itemize}
\item $\Docs' = \Docs \cup \{\aDoc'\}$,
\item $\bDoc \in \Active'$ whenever
  $\bDoc \in \Active$ and $\bDoc\ne\aDoc$, or
  $\bDoc=\aDoc'$,
\item $\bDoc \leChron' \cDoc$ whenever
  $\bDoc \leChron \cDoc$, or $\cDoc = \aDoc'$,
\item $\bDoc \parentOf' \cDoc$ whenever
  $\bDoc \parentOf \cDoc$, or
  $\bDoc \parentOf \aDoc$ and $\cDoc = \aDoc'$, and
\item $\bDoc \eqSess' \cDoc$ whenever
  $\bDoc \eqSess \cDoc$, or
  $\bDoc=\cDoc$, or
  $\bDoc \eqSess \aDoc$ and $\cDoc = \aDoc'$, or
  $\aDoc \eqSess \cDoc$ and $\bDoc = \aDoc'$.
\end{itemize}
Define \emph{navigating from $\aDoc$ to $\aDoc'$ in $\aNH$}, where $\aDoc\in\FullyActive$ to be the result of:
\begin{itemize}
\item deleting the session future of $\aDoc$, and then
\item replacing $\aDoc$ by $\aDoc'$.
  \QED
\end{itemize}
\end{definition}
There are two parts to navigation from $\aDoc$ to $\aDoc'$: deleting the session
future of $\aDoc$, followed by replacing $\aDoc$ by $\aDoc'$. For example,
navigating from $1$ to $6$ in:
\[\begin{tikzpicture}
  \node[doc,active,fully](0) at (0,0){0};
  \node[doc,jshactive,fully](1) at (1,-1){1};
  \node[doc](2) at (2,-2){2};
  \node[doc,active,fully](3) at (3,0){3};
  \node[doc,active,fully](4) at (4,-2){4};
  \node[doc](5) at (5,-1){5};
  \node[draw,dotted,fit=(0)] {};
  \node[draw,dotted,fit=(1)(5)] {};
  \node[draw,dotted,fit=(2)(4)] {};
  \node[draw,dotted,fit=(3)] {};
  \draw[->](0)--(1);
  \draw[->](1)--(4);
  \draw[->](0)--(3);
\end{tikzpicture}\]
we first delete the session future of $1$ (which is $5$):
\[\begin{tikzpicture}
  \node[doc,active,fully](0) at (0,0){0};
  \node[doc,jshactive,fully](1) at (1,-1){1};
  \node[doc](2) at (2,-2){2};
  \node[doc,active,fully](3) at (3,0){3};
  \node[doc,active,fully](4) at (4,-2){4};
  \node[draw,dotted,fit=(0)] {};
  \node[draw,dotted,fit=(1)] {};
  \node[draw,dotted,fit=(2)(4)] {};
  \node[draw,dotted,fit=(3)] {};
  \draw[->](0)--(1);
  \draw[->](1)--(4);
  \draw[->](0)--(3);
\end{tikzpicture}\]
then replace $1$ by $6$:
\[\begin{tikzpicture}
  \node[doc,active,fully](0) at (0,0){0};
  \node[doc](1) at (1,-1){1};
  \node[doc](2) at (2,-2){2};
  \node[doc,active,fully](3) at (3,0){3};
  \node[doc,active](4) at (4,-2){4};
  \node[doc,jshactive,fully](6) at (6,-1){6};
  \node[draw,dotted,fit=(0)] {};
  \node[draw,dotted,fit=(1)(6)] {};
  \node[draw,dotted,fit=(2)(4)] {};
  \node[draw,dotted,fit=(3)] {};
  \draw[->](0)--(6);
  \draw[->](1)--(4);
  \draw[->](0)--(3);
\end{tikzpicture}\]
We also define \emph{traversing the history}, which changes the active
documents.
\begin{definition}[Traversal]
Define \emph{traversing the history to $\aDoc$ in $\aNH$}, where $\aDoc \in \Docs$,
to be $\aNH'=(\Docs,\Active',{\leChron},{\parentOf},{\eqSess})$ where:
\begin{itemize}
\item $\bDoc\in\Active'$ whenever $\aDoc\not\eqSess\bDoc \in \Active$, or
  $\aDoc=\bDoc$.
\end{itemize}
Define \emph{$\aNH$ traverses the history by $+\delta$ to $\aNH'$} when:
\begin{itemize}
\item the joint session future of $\aNH$ is $\aDoc_1 \ltChron \cdots \ltChron \aDoc_\delta \ltChron \cdots$,
\item $H$ traverses the history to $d_\delta$ in $H'$
\end{itemize}
Define \emph{$\aNH$ traverses the history by $-\delta$ to $\aNH'$} when:
\begin{itemize}
\item the joint session past of $\aNH$ is $\aDoc_1 \gtChron \cdots \gtChron \aDoc_\delta \gtChron \cdots$,
\item $H$ traverses the history to $d_\delta$ in $H'$
\end{itemize}
Define \emph{$\aNH$ traverses the history by $0$ to $\aNH'$} when $\aNH=\aNH'$.
\end{definition}
For example, to traverse the history by $-2$ in:
\[\begin{tikzpicture}
  \node[doc,active,fully](0) at (0,0){0};
  \node[doc](1) at (1,-1){1};
  \node[doc](2) at (2,-2){2};
  \node[doc,active,fully](3) at (3,0){3};
  \node[doc,active](4) at (4,-2){4};
  \node[doc,jshactive,fully](6) at (6,-1){6};
  \node[draw,dotted,fit=(0)] {};
  \node[draw,dotted,fit=(1)(6)] {};
  \node[draw,dotted,fit=(2)(4)] {};
  \node[draw,dotted,fit=(3)] {};
  \draw[->](0)--(6);
  \draw[->](1)--(4);
  \draw[->](0)--(3);
\end{tikzpicture}\]
we find the joint session past (which is $2 \gtChron 1$)
and traverse the history to the second item (which is $1$)
to arrive at:
\[\begin{tikzpicture}
  \node[doc,active,fully](0) at (0,0){0};
  \node[doc,jshactive,fully](1) at (1,-1){1};
  \node[doc](2) at (2,-2){2};
  \node[doc,active,fully](3) at (3,0){3};
  \node[doc,active,fully](4) at (4,-2){4};
  \node[doc](6) at (6,-1){6};
  \node[draw,dotted,fit=(0)] {};
  \node[draw,dotted,fit=(1)(6)] {};
  \node[draw,dotted,fit=(2)(4)] {};
  \node[draw,dotted,fit=(3)] {};
  \draw[->](0)--(1);
  \draw[->](1)--(4);
  \draw[->](0)--(3);
\end{tikzpicture}\]
These definitions are intended to formally capture the HTML
specification, for example \cite[\S7.7.2]{whatwg} includes:
\begin{quote}
  To \textbf{traverse the history by a delta} $\delta$, the user agent
  must append a task to this top-level browsing context's session
  history traversal queue, the task consisting of running the
  following steps:
  \begin{enumerate}

  \item If the index of the current entry of the joint session history
    plus $\delta$ is less than zero or greater than or equal to the
    number of items in the joint session history, then abort these
    steps.

  \item Let \emph{specified entry} be the entry in the joint session
    history whose index is the sum of $\delta$ and the index of the
    current entry of the joint session history.

  \item Let \emph{specified browsing context} be the browsing context
    of the specified entry.

  \item If the specified browsing context's active document's unload a
    document algorithm is currently running, abort these steps.

  \item Queue a task that consists of running the following
    substeps [\dots]

    \begin{itemize}

    \item[3.] Traverse the history of the specified browsing context
      to the specified entry.

    \end{itemize}
  \end{enumerate}
\end{quote}

\subsection{Properties}
\label{sec:properties}

We now consider the fundamental property of navigation history:

\begin{definition}[Fundamental property]
\label{defn:fundamental}
  $H$ satisfies the \emph{fundamental property of traversal} whenever
  $H$ traverses the history by $\delta$ to $H'$
  and $H'$ traverses the history by $\delta'$ to $H''$
  implies $H$ traverses the history by $\delta+\delta'$ to $H''$.
  \QED
\end{definition}
Unfortunately, navigation histories as specified do not always satisfy the fundamental property,
due to ways individual session histories are combined into the joint session history.
In this section, we give a series of counterexamples, and propose patches to
the model to address each counterexample.

\begin{counterexample}
  \label{counterexample:intermediaries}
  Let $H$ be:
  \[\begin{tikzpicture}
    \node[doc,active,fully](0) at (0,0){0};
    \node[doc,jshactive,fully](1) at (1,-2){1};
    \node[doc](3) at (3,-2){3};
    \node[doc,active,fully](2) at (2,-1){2};
    \node[doc](4) at (4,-1){4};
    \node[draw,dotted,fit=(0)] {};
    \node[draw,dotted,fit=(1)(3)] {};
    \node[draw,dotted,fit=(2)(4)] {};
    \draw[->](0)--(1);
    \draw[->](0)--(2);
  \end{tikzpicture}\]
  which traverses the history by $1$ to:
  \[\begin{tikzpicture}
    \node[doc,active,fully](0) at (0,0){0};
    \node[doc](1) at (1,-2){1};
    \node[doc,jshactive,fully](3) at (3,-2){3};
    \node[doc,active,fully](2) at (2,-1){2};
    \node[doc](4) at (4,-1){4};
    \node[draw,dotted,fit=(0)] {};
    \node[draw,dotted,fit=(1)(3)] {};
    \node[draw,dotted,fit=(2)(4)] {};
    \draw[->](0)to[out=300,in=180](3);
    \draw[->](0)--(2);
  \end{tikzpicture}\]
  which traverses the history by $1$ to:
  \[\begin{tikzpicture}
    \node[doc,active,fully](0) at (0,0){0};
    \node[doc](1) at (1,-2){1};
    \node[doc,active,fully](3) at (3,-2){3};
    \node[doc](2) at (2,-1){2};
    \node[doc,jshactive,fully](4) at (4,-1){4};
    \node[draw,dotted,fit=(0)] {};
    \node[draw,dotted,fit=(1)(3)] {};
    \node[draw,dotted,fit=(2)(4)] {};
    \draw[->](0)to[out=300,in=180](3);
    \draw[->](0)--(4);
  \end{tikzpicture}\]
  but $H$ traverses the history by $2$ to:
  \[\begin{tikzpicture}
    \node[doc,active,fully](0) at (0,0){0};
    \node[doc,active,fully](1) at (1,-2){1};
    \node[doc](3) at (3,-2){3};
    \node[doc](2) at (2,-1){2};
    \node[doc,jshactive,fully](4) at (4,-1){4};
    \node[draw,dotted,fit=(0)] {};
    \node[draw,dotted,fit=(1)(3)] {};
    \node[draw,dotted,fit=(2)(4)] {};
    \draw[->](0)--(1);
    \draw[->](0)--(4);
  \end{tikzpicture}\]
  \QED
\end{counterexample}
This counterexample is caused by the definition of `traverses the history by $\delta$' which
only traverses one document's session history. Instead, we should traverse
the history of all $\delta$ documents.

\begin{patch}[Traverse intermediaries]
Define \emph{$\aNH$ traverses the history by $+\delta$ to $\aNH'$} when:
\begin{itemize}
\item the joint session future of $\aNH$ is $\aDoc_1 \ltChron \cdots \ltChron \aDoc_\delta \ltChron \cdots$,
\item there is some $\aNH=\aNH_0,\ldots,\aNH_\delta=\aNH'$, such that
\item $H_{i-1}$ traverses the history to $d_i$ in $H_i$ for each $1 \le i \le \delta$.
\end{itemize}
Define \emph{$\aNH$ traverses the history by $-\delta$ to $\aNH'$} when:
\begin{itemize}
\item the joint session past of $\aNH$ is $\aDoc_1 \gtChron \cdots \gtChron \aDoc_\delta \gtChron \cdots$,
\item there is some $\aNH=\aNH_0,\ldots,\aNH_\delta=\aNH'$, such that
\item $H_{i-1}$ traverses the history to $d_i$ in $H_i$ for each $1 \le i \le \delta$.
  \QED
\end{itemize}
\end{patch}

\begin{counterexample}
  Let $H$ be:
  \[\begin{tikzpicture}
    \node[doc,active,fully](0) at (0,0){0};
    \node[doc,jshactive,fully](1) at (1,-1){1};
    \node[doc](2) at (2,-1){2};
    \node[doc,active](3) at (3,-2){3};
    \node[doc](4) at (4,-2){4};
    \node[doc](5) at (5,0){5};
    \node[draw,dotted,fit=(0)(5)] {};
    \node[draw,dotted,fit=(1)(2)] {};
    \node[draw,dotted,fit=(3)(4)] {};
    \draw[->](0)--(1);
    \draw[->](2)--(3);
  \end{tikzpicture}\]
  which traverses the history by $1$ to:
  \[\begin{tikzpicture}
    \node[doc,active,fully](0) at (0,0){0};
    \node[doc](1) at (1,-1){1};
    \node[doc,jshactive,fully](2) at (2,-1){2};
    \node[doc,active,fully](3) at (3,-2){3};
    \node[doc](4) at (4,-2){4};
    \node[doc](5) at (5,0){5};
    \node[draw,dotted,fit=(0)(5)] {};
    \node[draw,dotted,fit=(1)(2)] {};
    \node[draw,dotted,fit=(3)(4)] {};
    \draw[->](0)--(2);
    \draw[->](2)--(3);
  \end{tikzpicture}\]
  which in turn traverses the history by $1$ to:
  \[\begin{tikzpicture}
    \node[doc,active,fully](0) at (0,0){0};
    \node[doc](1) at (1,-1){1};
    \node[doc,active,fully](2) at (2,-1){2};
    \node[doc](3) at (3,-2){3};
    \node[doc,jshactive,fully](4) at (4,-2){4};
    \node[doc](5) at (5,0){5};
    \node[draw,dotted,fit=(0)(5)] {};
    \node[draw,dotted,fit=(1)(2)] {};
    \node[draw,dotted,fit=(3)(4)] {};
    \draw[->](0)--(2);
    \draw[->](2)--(4);
  \end{tikzpicture}\]
  but $H$ traverses the history by $2$ to:
  \[\begin{tikzpicture}
    \node[doc](0) at (0,0){0};
    \node[doc](1) at (1,-1){1};
    \node[doc,active](2) at (2,-1){2};
    \node[doc,active](3) at (3,-2){3};
    \node[doc](4) at (4,-2){4};
    \node[doc,jshactive,fully](5) at (5,0){5};
    \node[draw,dotted,fit=(0)(5)] {};
    \node[draw,dotted,fit=(1)(2)] {};
    \node[draw,dotted,fit=(3)(4)] {};
    \draw[->](0)--(2);
    \draw[->](2)--(3);
  \end{tikzpicture}\]
  \QED
\end{counterexample}
The problem this time is that the definition of `joint session history' only includes
the fully active documents, not all active documents.

\begin{patch}[Active joint session history]
Define:
\begin{itemize}
\item the \emph{joint session future} is $\{ \bDoc \mid \exists \aDoc \in \Active \st \aDoc \ltSess \bDoc \}$, and
\item the \emph{joint session past} is $\{ \bDoc \mid \exists \aDoc \in \Active \st \aDoc \gtSess \bDoc \}$.
  \QED
\end{itemize}
\end{patch}

\begin{counterexample}
  Let $H$ be:
  \[\begin{tikzpicture}
    \node[doc,active,fully](0) at (0,0){0};
    \node[doc](1) at (1,-1){1};
    \node[doc,jshactive,fully](2) at (2,-2){2};
    \node[doc](3) at (3,-2){3};
    \node[doc,active,fully](4) at (4,-1){4};
    \node[draw,dotted,fit=(0)]{};
    \node[draw,dotted,fit=(1)(4)]{};
    \node[draw,dotted,fit=(2)(3)]{};
    \draw[->](0)--(4);
    \draw[->](0)to[out=-20,in=90](2);
  \end{tikzpicture}\]
  which traverses the history by $-1$ to:
  \[\begin{tikzpicture}
    \node[doc,active,fully](0) at (0,0){0};
    \node[doc,jshactive,fully](1) at (1,-1){1};
    \node[doc,active,fully](2) at (2,-2){2};
    \node[doc](3) at (3,-2){3};
    \node[doc](4) at (4,-1){4};
    \node[draw,dotted,fit=(0)]{};
    \node[draw,dotted,fit=(1)(4)]{};
    \node[draw,dotted,fit=(2)(3)]{};
    \draw[->](0)--(1);
    \draw[->](0)to[out=-20,in=90](2);
  \end{tikzpicture}\]
  which traverses the history by $1$ to:
  \[\begin{tikzpicture}
    \node[doc,active,fully](0) at (0,0){0};
    \node[doc,active,fully](1) at (1,-1){1};
    \node[doc](2) at (2,-2){2};
    \node[doc,jshactive,fully](3) at (3,-2){3};
    \node[doc](4) at (4,-1){4};
    \node[draw,dotted,fit=(0)]{};
    \node[draw,dotted,fit=(1)(4)]{};
    \node[draw,dotted,fit=(2)(3)]{};
    \draw[->](0)--(1);
    \draw[->](0)to[out=-20,in=120](3);
  \end{tikzpicture}\]
  which is not the same as $H$.
  \QED
\end{counterexample}
This counterexample is caused by an asymmetry in the definition
of traversal: it is defined in terms of navigating \emph{to} a document
$d$, and not navigating \emph{from} a document. We fix this
by making the definition symmetric:

\begin{patch}[Symmetric traversal]
Define \emph{$\aNH$ traverses the history from $\aDoc'$} when there is some $\aDoc$ such that:
\begin{itemize}
\item $\aDoc\ltSess\aDoc'$,
\item for any $\bDoc\ltSess\aDoc'$ we have $\bDoc\leChron\aDoc$, and
\item $\aNH$ traverses the history to $\aDoc$.
\end{itemize}
Define \emph{$\aNH$ traverses the history by $-\delta$ to $\aNH'$} when:
\begin{itemize}
\item the joint session past and active documents of $\aNH$ are $\aDoc_1 \gtChron \cdots \gtChron \aDoc_\delta \gtChron \cdots$,
\item there is some $\aNH=\aNH_0,\ldots,\aNH_\delta=\aNH'$, such that
\item $H_{i-1}$ traverses the history from $d_i$ in $H_i$ for each $1 \le i \le \delta$.
  \QED
\end{itemize}
\end{patch}
For example, to traverse the history by $-1$ from:
\[\begin{tikzpicture}
\node[doc,active,fully](0) at (0,0){0};
\node[doc](1) at (1,-1){1};
\node[doc,jshactive,fully](2) at (2,-2){2};
\node[doc](3) at (3,-2){3};
\node[doc,active,fully](4) at (4,-1){4};
\node[draw,dotted,fit=(0)]{};
\node[draw,dotted,fit=(1)(4)]{};
\node[draw,dotted,fit=(2)(3)]{};
\draw[->](0)--(4);
\draw[->](0)to[out=-20,in=90](2);
\end{tikzpicture}\]
we find the joint session past and active documents (which is $4 \gtChron 2 \gtChron 1 \gtChron 0$)
and traverse the history from the first item (which is $4$)
which is the same as traversing the history to $1$:
\[\begin{tikzpicture}
\node[doc,active,fully](0) at (0,0){0};
\node[doc,jshactive,fully](1) at (1,-1){1};
\node[doc,active,fully](2) at (2,-2){2};
\node[doc](3) at (3,-2){3};
\node[doc](4) at (4,-1){4};
\node[draw,dotted,fit=(0)]{};
\node[draw,dotted,fit=(1)(4)]{};
\node[draw,dotted,fit=(2)(3)]{};
\draw[->](0)--(1);
\draw[->](0)to[out=-20,in=90](2);
\end{tikzpicture}\]

\begin{counterexample}
\label{cex:not-well-formed}
  Let $H$ be:
  \[\begin{tikzpicture}
    \node[doc,active,fully](0) at (0,0){0};
    \node[doc,active,fully](1) at (1,-1){1};
    \node[doc](2) at (2,-2){2};
    \node[doc](3) at (3,-1){3};
    \node[doc,jshactive,fully](4) at (4,-2){4};
    \node[draw,dotted,fit=(0)]{};
    \node[draw,dotted,fit=(1)(3)]{};
    \node[draw,dotted,fit=(2)(4)]{};
    \draw[->](0)--(1);
    \draw[->](1)--(4);
  \end{tikzpicture}\]
  which traverses the history by $-1$ to:
  \[\begin{tikzpicture}
    \node[doc,active,fully](0) at (0,0){0};
    \node[doc,active,fully](1) at (1,-1){1};
    \node[doc,jshactive,fully](2) at (2,-2){2};
    \node[doc](3) at (3,-1){3};
    \node[doc ](4) at (4,-2){4};
    \node[draw,dotted,fit=(0)]{};
    \node[draw,dotted,fit=(1)(3)]{};
    \node[draw,dotted,fit=(2)(4)]{};
    \draw[->](0)--(1);
    \draw[->](1)--(2);
  \end{tikzpicture}\]
  which traverses the history by $1$ to:
  \[\begin{tikzpicture}
    \node[doc,active,fully](0) at (0,0){0};
    \node[doc](1) at (1,-1){1};
    \node[doc,active](2) at (2,-2){2};
    \node[doc,jshactive,fully](3) at (3,-1){3};
    \node[doc](4) at (4,-2){4};
    \node[draw,dotted,fit=(0)]{};
    \node[draw,dotted,fit=(1)(3)]{};
    \node[draw,dotted,fit=(2)(4)]{};
    \draw[->](0)--(3);
    \draw[->](1)--(2);
  \end{tikzpicture}\]
  which is not the same as $H$.
  \QED
\end{counterexample}
The problem here is not the definition of `traversing by $\delta$', but the definition
of navigation histories themselves. They allow for states such as $H$ from
Counterexample~\ref{cex:not-well-formed}, which includes the problematic documents:
\[\begin{tikzpicture}
  \node[doc,active,fully](1) at (1,-1){1};
  \node[doc](2) at (2,-2){2};
  \node[doc](3) at (3,-1){3};
  \node[doc,jshactive,fully](4) at (4,-2){4};
  \node[draw,dotted,fit=(1)(3)]{};
  \node[draw,dotted,fit=(2)(4)]{};
\end{tikzpicture}\]
There are similar problems with documents:
\[\begin{tikzpicture}
  \node[doc,active,fully](2) at (2,-1){2};
  \node[doc](1) at (1,-2){1};
  \node[doc](3) at (3,-1){3};
  \node[doc,jshactive,fully](4) at (4,-2){4};
  \node[draw,dotted,fit=(2)(3)]{};
  \node[draw,dotted,fit=(1)(4)]{};
\end{tikzpicture}\]
and with documents:
\[\begin{tikzpicture}
  \node[doc,active,fully](1) at (1,-1){1};
  \node[doc](3) at (3,-2){3};
  \node[doc](2) at (2,-1){2};
  \node[doc,jshactive,fully](4) at (4,-2){4};
  \node[draw,dotted,fit=(1)(2)]{};
  \node[draw,dotted,fit=(3)(4)]{};
\end{tikzpicture}\]
It turns out that these are the only remaining cause of counterexamples,
and we will call examples like this not \emph{well-formed}.

\begin{definition}[Well-formed]
  A navigation history is \emph{well formed} whenever
  for any $a \ltSess b$ and $c \ltSess d$,
  if $a \in \Active$ and $d \in \Active$ then $d \leChron b$.
\end{definition}
We have that traversal preserves being well-formed: if $H$ is well-formed, and $H$ traverses
by $\delta$ to $H'$, then $H'$ is well-formed. Unfortunately, this is not true for navigation,
because of the way it clears the session future.

\begin{counterexample}
\label{cex:wf-nav}
  Let $H$ be the well-formed history:
  \[\begin{tikzpicture}
    \node[doc,active,fully](0) at (0,0){0};
    \node[doc,active,fully](1) at (1,-1){1};
    \node[doc,jshactive,fully](2) at (2,-2){2};
    \node[doc](3) at (3,-1){3};
    \node[draw,dotted,fit=(0)]{};
    \node[draw,dotted,fit=(1)(3)]{};
    \node[draw,dotted,fit=(2)]{};
    \draw[->](0)--(1);
    \draw[->](1)--(2);
  \end{tikzpicture}\]
  which navigates from $2$ to:
  \[\begin{tikzpicture}
    \node[doc,active,fully](0) at (0,0){0};
    \node[doc,active,fully](1) at (1,-1){1};
    \node[doc](2) at (2,-2){2};
    \node[doc](3) at (3,-1){3};
    \node[doc,jshactive,fully](4) at (4,-2){4};
    \node[draw,dotted,fit=(0)]{};
    \node[draw,dotted,fit=(1)(3)]{};
    \node[draw,dotted,fit=(2)(4)]{};
    \draw[->](0)--(1);
    \draw[->](1)--(4);
  \end{tikzpicture}\]
  which is not well-formed.
  \QED
\end{counterexample}
Fortunately, we can patch navigation to address this, by requiring that
we clear the entire joint session future, not just the session future of the document
being navigated from.

\begin{patch}[Navigation deletes joint session future]
Define \emph{navigating from $\aDoc$ to $\aDoc'$ in $\aNH$}, where $\aDoc\in\FullyActive$ to be the result of:
\begin{itemize}
\item deleting the joint session future, and then
\item replacing $\aDoc$ by $\aDoc'$.
  \QED
\end{itemize}
\end{patch}
For example in Counterexample~\ref{cex:wf-nav}, navigation from 2 now results in the well-formed history:
  \[\begin{tikzpicture}
    \node[doc,active,fully](0) at (0,0){0};
    \node[doc,active,fully](1) at (1,-1){1};
    \node[doc](2) at (2,-2){2};
    \node[doc,jshactive,fully](4) at (4,-2){4};
    \node[draw,dotted,fit=(0)]{};
    \node[draw,dotted,fit=(1)]{};
    \node[draw,dotted,fit=(2)(4)]{};
    \draw[->](0)--(1);
    \draw[->](1)--(4);
  \end{tikzpicture}\]
With these patches, we can prove the fundamental property of traversal.

\begin{theorem}
\label{thm:fundamental}
  For any well-formed navigation history $H$,
  if $H$ traverses the history by $\delta$ to $H'$
  and $H'$ traverses the history by $\delta'$ to $H''$
  then $H$ traverses the history by $\delta+\delta'$ to $H''$.
\end{theorem}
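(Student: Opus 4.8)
The plan is to reduce everything to a single clean lemma about *traversing the history to a document*, since after the patches both $+\delta$ and $-\delta$ traversals are defined as $\delta$-fold compositions of "traverse to $d_i$" (respectively "traverse from $d_i$", which by the Symmetric traversal patch unfolds to "traverse to" some document). So the real content is: how do two successive "traverse to" steps compose, and how does the *joint session past/future sequence* change under a "traverse to" step? I would first prove a structural invariant: if $H$ is well-formed and $H$ traverses to $d$ giving $H'$, then the joint session future of $H'$ and the joint session past of $H'$ (together with the active documents, in the $-\delta$ case) are obtained from those of $H$ simply by "moving the cursor" — i.e. the underlying chronologically-ordered list of relevant entries is the *same* list, and only the split point between past and future moves, from wherever it was to just after $d$. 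This is exactly the discrete analogue of "the joint session history is a line and traversal slides a pointer along it," and it is what makes $\delta$ then $\delta'$ agree with $\delta+\delta'$.

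Concretely I would proceed as follows. (1) Show $D$, ${\parentOf}$, ${\leChron}$, ${\eqSess}$ are untouched by any traversal, so only $\Active$ changes; hence "traverse to $d$" is idempotent-ish and the only thing to track is which active representative each ${\eqSess}$-class has. (2) Using the constraint that for every $d$ there is a unique active $d' \eqSess d$, observe that "traverse to $d$" just swaps the active representative of $d$'s session class to $d$ and leaves all other classes' representatives fixed. (3) Prove the key invariant above about the joint session past/future list: well-formedness is precisely the hypothesis that prevents the fully-active (resp. active) documents' session-futures from overlapping in a way that would reorder the list when the cursor moves; here I would lean on Definition (Well-formed) together with the fact, asserted in the text, that traversal preserves well-formedness, so every intermediate $H_i$ is well-formed and the invariant applies at each step. (4) Conclude: if the joint session future list of $H$ is $d_1 \ltChron d_2 \ltChron \cdots$ then after traversing to $d_\delta$ the joint session past/future list of $H'$ is the *same* bi-infinite-in-spirit list with the cursor now after $d_\delta$, so traversing $H'$ by $\delta'$ lands on the entry $\delta'$ steps further along, which is $d_{\delta+\delta'}$, and the resulting active set is the same as doing all $\delta+\delta'$ single steps from $H$ in one go. The sign bookkeeping ($\delta,\delta'$ of the same or opposite signs; the case where the cursor crosses $0$) is handled uniformly once the list-with-cursor picture is established, with the $0$ case ($\delta=0$ or $\delta'=0$) trivial since traverse-by-$0$ is the identity.

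The main obstacle I expect is step (3): proving that the ordered list of joint-session-past-and-future entries is genuinely invariant (as a list) under a single traverse step, and in particular that no entry enters or leaves this list except by crossing the cursor. When we move the active representative of one session class from $b$ to $a$, the set $\{e \mid \exists c\in\Active .\, c \ltSess e\}$ can in principle gain or lose entries contributed by *that* class, and one must check these are exactly the entries between the old and new cursor positions and that entries contributed by *other* classes are unaffected — this is where well-formedness does the work, ruling out precisely the three "problematic" configurations displayed before Definition (Well-formed). I would isolate this as a standalone lemma ("traversal slides the joint-session cursor") and prove the theorem as an easy induction on $|\delta| + |\delta'|$ on top of it.
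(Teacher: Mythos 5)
Your plan matches the paper's proof in all essentials: the paper also reduces to single ``traverse to the forward target'' / ``traverse from the back target'' steps, proves that these two single steps are mutually inverse on well-formed histories (its Lemmas 3--4 are exactly your ``cursor slides along an invariant list'' invariant, stated locally in terms of the forward and back targets), shows well-formedness is preserved at each step (Lemmas 5--6), and concludes by induction on $\delta$. Your step (3) is where the paper also locates the real work, so the proposal is sound and follows essentially the same route.
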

\begin{proof}
  In this paper, we give a proof sketch. The full details have been mechanically verified in Agda~\cite{AgdaProofs}.
  Define:
  \begin{itemize}
  \item a document $d$ \emph{can go back} there is some $c \ltSess d$,
  \item the \emph{back target} $b$ is the $\le$-largest active document which can go back, and
  \item the \emph{forward target} $f$ is the $\le$-smallest document in the joint session future.
  \end{itemize}
  We then show some lemmas:
  \begin{enumerate}
  \item $H$ traverses by $+(\delta+1)$ to $H'$ if and only if
    $H$ traverses to $f$, then traverses by $+\delta$ to $H'$.
  \item $H$ traverses by $-(\delta+1)$ to $H'$ if and only if
    $H$ traverses from $b$, then traverses by $-\delta$ to $H'$.
  \item If $H$ is well-formed and $H$ traverses to $f$ with result $H'$,
    then $f$ is the back target of $H'$, and $H'$ traverses from $f$ with result $H$.
  \item If $H$ is well-formed and $H$ traverses from $b$ with result $H'$,
    then $b$ is the forward target of $H'$, and $H'$ traverses to $b$ with result $H$.
  \item If $H$ is well-formed and $H$ traverses to $f$ to $H'$, then $H'$ is well-formed.
  \item If $H$ is well-formed and $H$ traverses from $b$ to $H'$, then $H'$ is well-formed.
  \end{enumerate}
  The result is then an induction on $\delta$.
  \QED
\end{proof}

\section{Experiments}
\label{sec:experiments}

In this section, we summarize our experiments to validate the conformance of browser
implementations with respect to the \textsc{whatwg} specification, to our proposed
changes, and to each other.

We give details of how to recreate Counterexample~\ref{counterexample:intermediaries}
in detail, the other counterexamples are similar. We create an \textsc{html} page for the parent,
containing two \verb|iframe|s, both of which start at \verb|page1.html|, with a hyperlink
to \verb|page2.html|:
\begin{quote}
\raisebox{-.5\height}{
   \includegraphics[width=.45\linewidth]{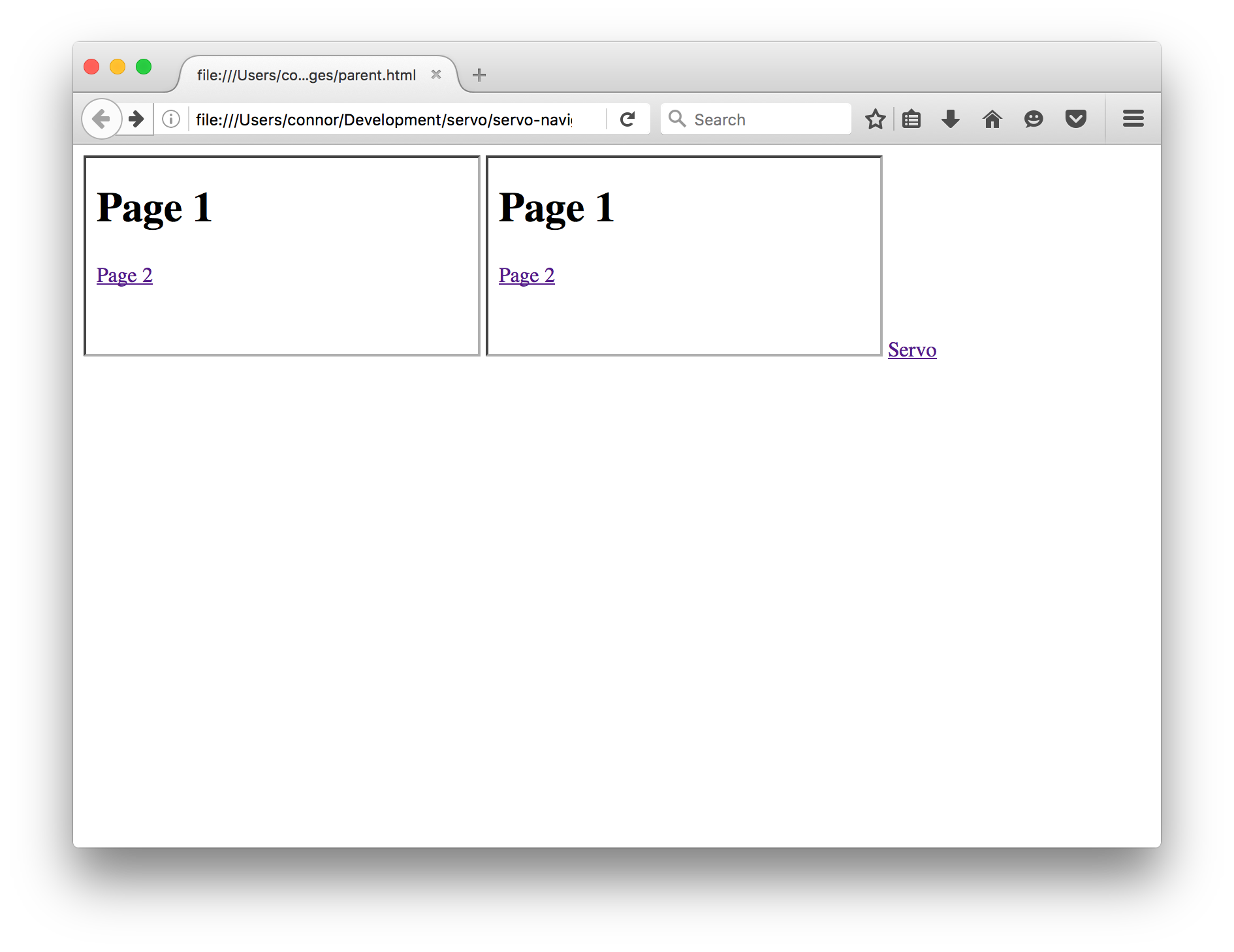}%
}~\raisebox{-.5\height}{
  \begin{tikzpicture}
    \node[doc,active,fully](0) at (0,0){0};
    \node[doc,jshactive,fully](1) at (1,-2){1};
    \node[doc,active,fully](2) at (2,-1){2};
    \node[draw,dotted,fit=(0)] {};
    \node[draw,dotted,fit=(1)] {};
    \node[draw,dotted,fit=(2)] {};
    \draw[->](0)--(1);
    \draw[->](0)--(2);
  \end{tikzpicture}
}
\end{quote}
Clicking on both hyperlinks loads both copies of \verb|page2.html|:
\begin{quote}
\raisebox{-.5\height}{
   \includegraphics[width=.45\linewidth]{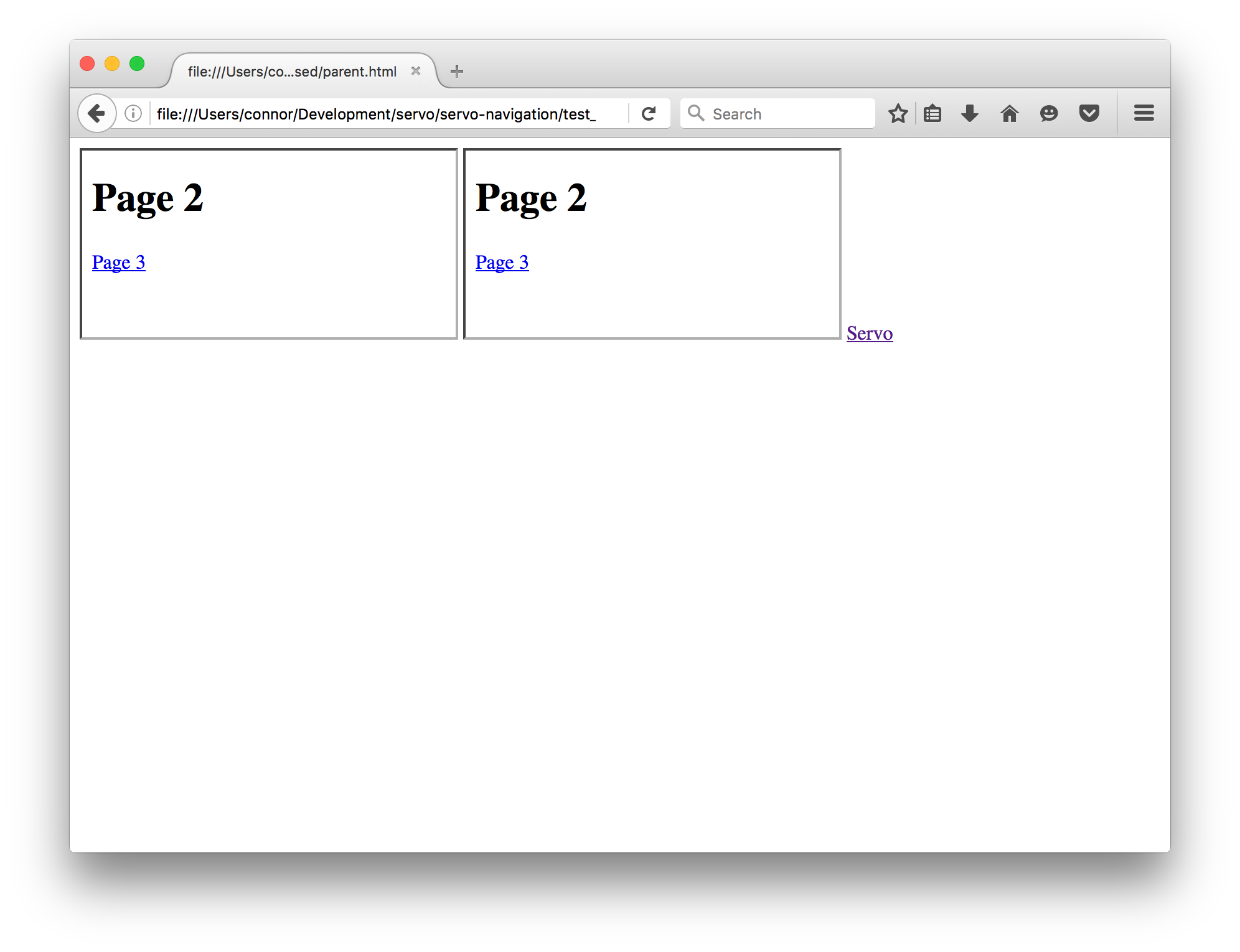}%
}~\raisebox{-.5\height}{
  \begin{tikzpicture}
    \node[doc,active,fully](0) at (0,0){0};
    \node[doc](1) at (1,-2){1};
    \node[doc,active,fully](3) at (3,-2){3};
    \node[doc](2) at (2,-1){2};
    \node[doc,jshactive,fully](4) at (4,-1){4};
    \node[draw,dotted,fit=(0)] {};
    \node[draw,dotted,fit=(1)(3)] {};
    \node[draw,dotted,fit=(2)(4)] {};
    \draw[->](0)to[out=300,in=180](3);
    \draw[->](0)--(4);
  \end{tikzpicture}
}
\end{quote}
Pressing the ``back'' button twice takes us to the initial state of Counterexample~\ref{counterexample:intermediaries}:
\begin{quote}
\raisebox{-.5\height}{
   \includegraphics[width=.45\linewidth]{images/experiments/forwardback4/firefox/1.png}%
}~\raisebox{-.5\height}{
  \begin{tikzpicture}
    \node[doc,active,fully](0) at (0,0){0};
    \node[doc,jshactive,fully](1) at (1,-2){1};
    \node[doc](3) at (3,-2){3};
    \node[doc,active,fully](2) at (2,-1){2};
    \node[doc](4) at (4,-1){4};
    \node[draw,dotted,fit=(0)] {};
    \node[draw,dotted,fit=(1)(3)] {};
    \node[draw,dotted,fit=(2)(4)] {};
    \draw[->](0)--(1);
    \draw[->](0)--(2);
  \end{tikzpicture}
}
\end{quote}
Now, the user can traverse the history by $+2$ (by holding down the ``forward'' button)
which results in state:
\begin{quote}
\raisebox{-.5\height}{
   \includegraphics[width=.45\linewidth]{images/experiments/forwardback4/firefox/8.png}%
}~\raisebox{-.5\height}{
  \begin{tikzpicture}
    \node[doc,active,fully](0) at (0,0){0};
    \node[doc](1) at (1,-2){1};
    \node[doc,active,fully](3) at (3,-2){3};
    \node[doc](2) at (2,-1){2};
    \node[doc,jshactive,fully](4) at (4,-1){4};
    \node[draw,dotted,fit=(0)] {};
    \node[draw,dotted,fit=(1)(3)] {};
    \node[draw,dotted,fit=(2)(4)] {};
    \draw[->](0)to[out=300,in=180](3);
    \draw[->](0)--(4);
  \end{tikzpicture}
}
\end{quote}
Experimentally, this shows that Firefox is aligned with our patched model, rather than
with the unpatched model. We can set up similar experiments for the other counterexamples,
and execute them in other browsers, which gives results\footnote{%
  Recall that Counterexample~4 depends on a non-well-formed navigation history,
  and that the patch for it is to make such states unreachable, and so
  experimentally unverifiable.
}:
\begin{center}
 {\sffamily
  \begin{tabular}{crrrr}
    \rowcolor{black!50!blue}
    &&&& \llap{\color{white} Counterexample} \\
    \rowcolor{black!50!blue}
    & \color{white}1 & \color{white}2 & \color{white}3 & \color{white}5 \\
    \rowcolor{white!90!blue}
    Firefox           & P  & P  & P  & P \\
    Chrome            & P  & P  & P  & P \\
    \rowcolor{white!90!blue}
    Safari            & P  & P  & P  & P \\
    Internet Explorer & U  & U  & P  & P \\
    \rowcolor{white!90!blue}
    Servo             & P  & P  & P  & P \\
  \end{tabular}
 }
 \quad
 \begin{tabular}{ll}
   \textsf{P}:& aligned with patched model \\
   \textsf{U}:& aligned with unpatched model \\
 \end{tabular}
\end{center}
Most browsers are compatible with the patched model rather than than
unpatched model, with the exception of Internet Explorer, which has mixed behaviour
(Edge is similar). Servo was designed from the patched model.

Moreover, performing these experiments shows some unexpected behaviours
in browser implementations. For example in Firefox, starting in state:
\begin{quote}
    \raisebox{-.5\height}{
      \includegraphics[width=.45\linewidth]{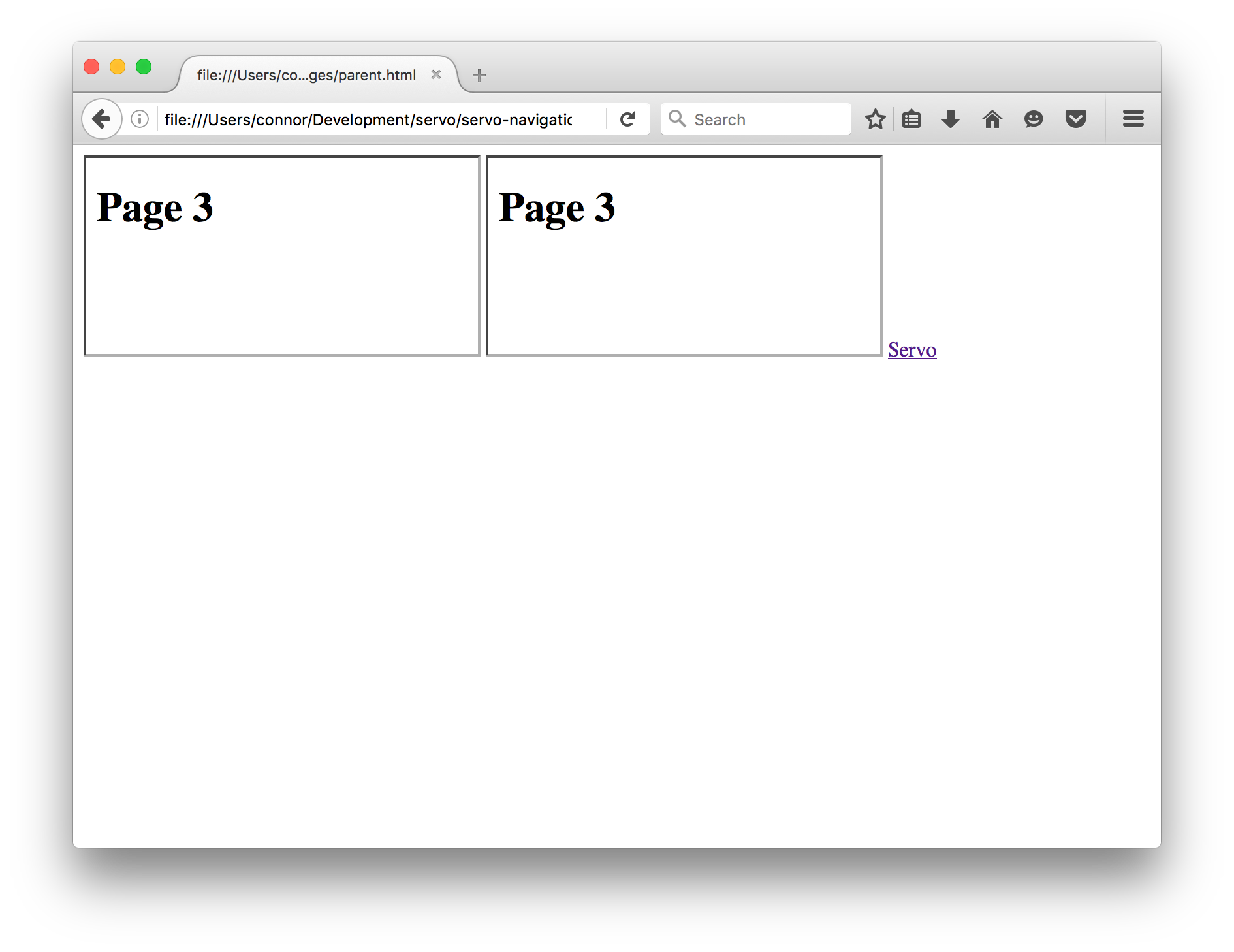}
    }~\raisebox{-.5\height}{\rlap{
      \begin{tikzpicture}
        \node[doc,active,fully](0) at (0,0){0};
        \node[doc](1) at (1,-1){1};
        \node[doc](2) at (2,-2){2};
        \node[doc](3) at (3,-1){3};
        \node[doc,active,fully](4) at (4,-1){4};
        \node[doc](5) at (5,-2){5};
        \node[doc,jshactive,fully](6) at (6,-2){6};
        \node[draw,dotted,fit=(0)]{};
        \node[draw,dotted,fit=(1)(4)]{};
        \node[draw,dotted,fit=(2)(6)]{};
        \draw[->](0)to[out=0,in=140](4);
        \draw[->](0)to[out=0,in=120](6);
      \end{tikzpicture}
    }}
\end{quote}
and traversing by $-4$ results in state:
\begin{quote}
    \raisebox{-.5\height}{
      \includegraphics[width=.45\linewidth]{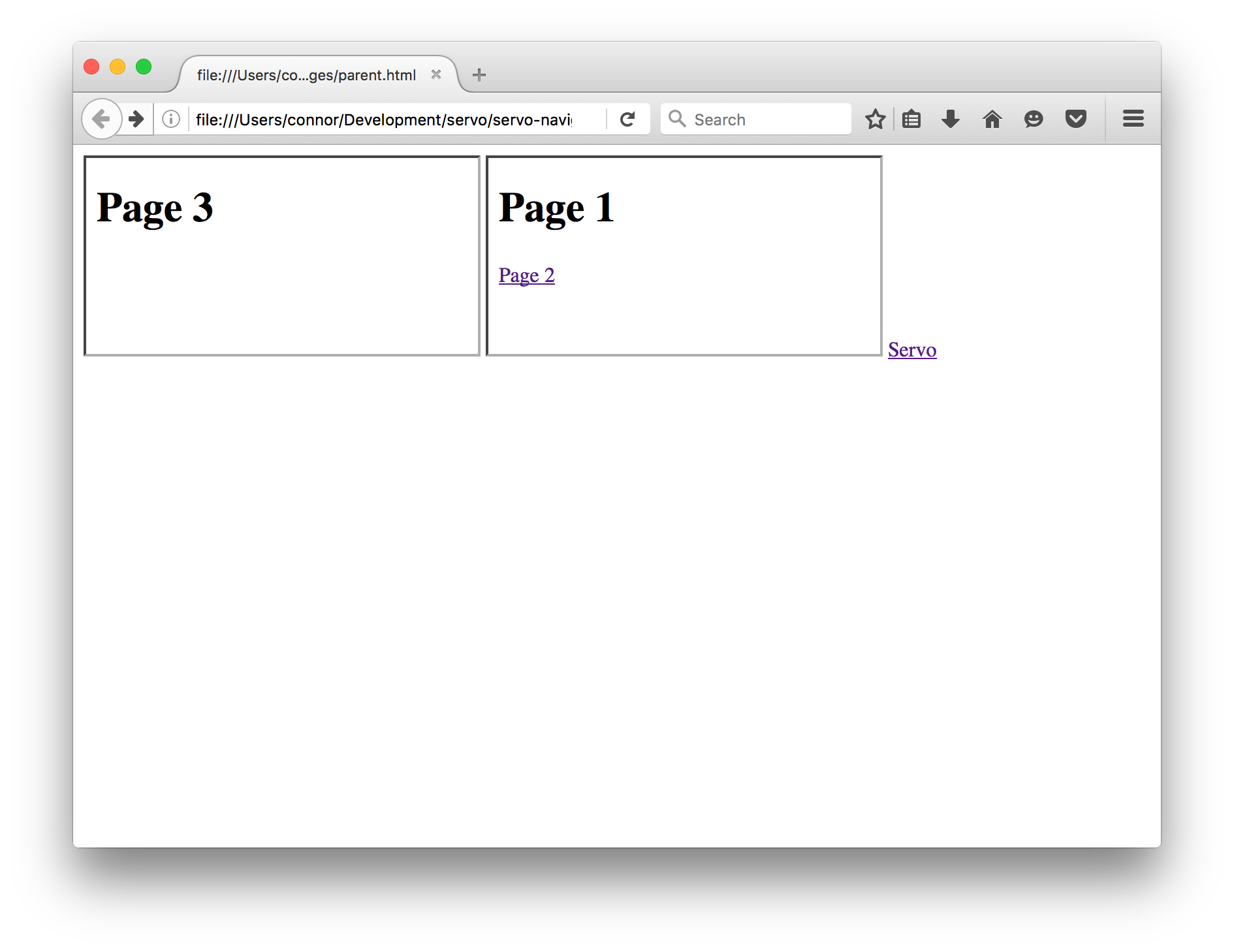}
    }~\raisebox{-.5\height}{\rlap{
      \begin{tikzpicture}
        \node[doc,active,fully](0) at (0,0){0};
        \node[doc](1) at (1,-1){1};
        \node[doc,jshactive,fully](2) at (2,-2){2};
        \node[doc](3) at (3,-1){3};
        \node[doc,active,fully](4) at (4,-1){4};
        \node[doc](5) at (5,-2){5};
        \node[doc](6) at (6,-2){6};
        \node[draw,dotted,fit=(0)]{};
        \node[draw,dotted,fit=(1)(4)]{};
        \node[draw,dotted,fit=(2)(6)]{};
        \draw[->](0)to[out=0,in=140](4);
        \draw[->](0)to[out=-20,in=90](2);
      \end{tikzpicture}
    }}
\end{quote}
This state is unexpected, as document $4$ should have traversed to document $1$, and any state
showing \verb|page3.html| should be capable of going back.

In Safari, the use of \verb|pushState| and \verb|popState| for navigation has unexpected
results. We can use \verb|pushState| and \verb|popState| to construct state:
\begin{quote}
    \raisebox{-.5\height}{
      \includegraphics[width=.45\linewidth]{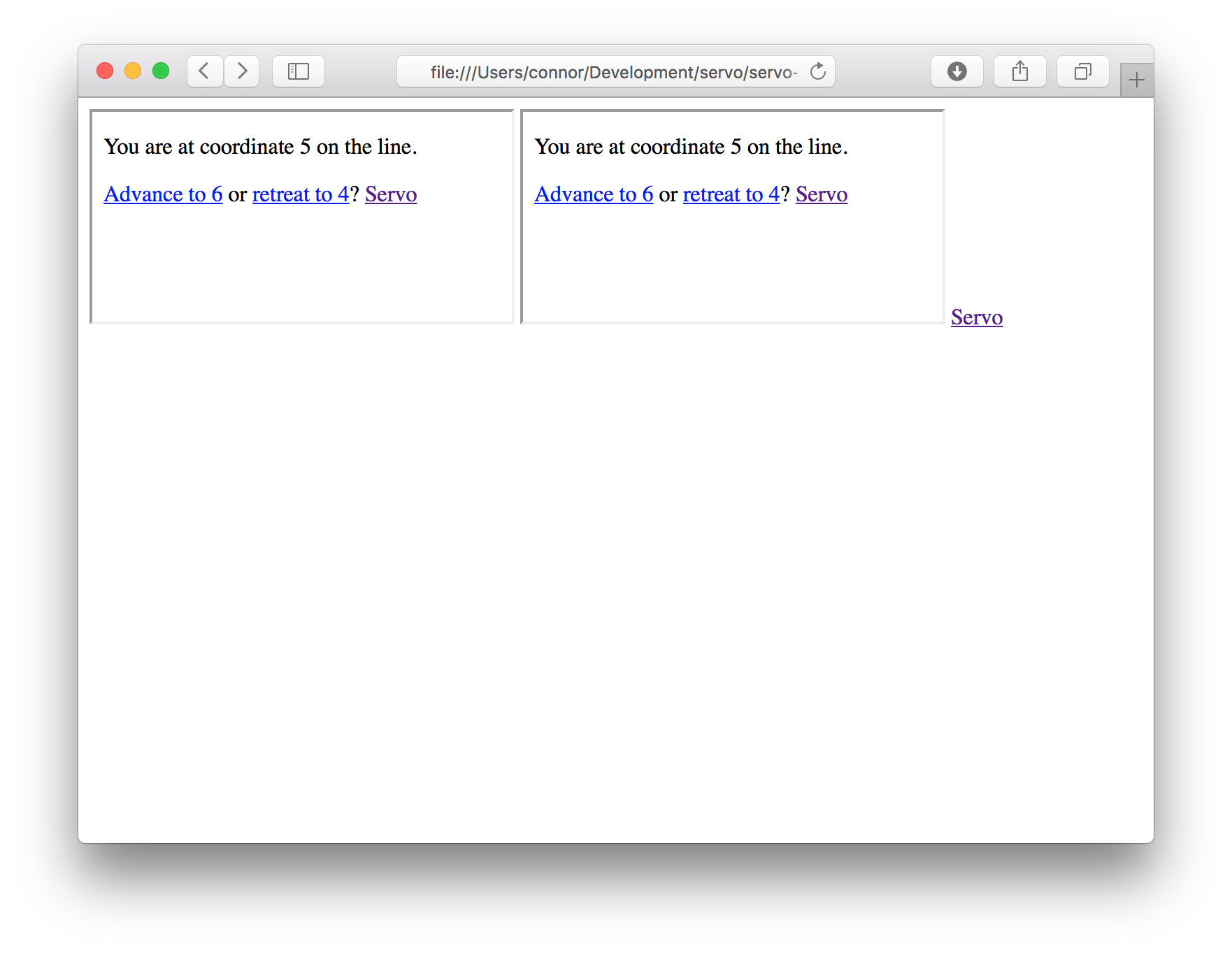}%
    }~\raisebox{-.5\height}{\rlap{
      \begin{tikzpicture}
        \node[doc,active,fully](0) at (0,0){0};
        \node[doc,active,fully](1) at (1,-1){1};
        \node[doc,jshactive,fully](2) at (2,-2){2};
        \node[doc](3) at (3,-1){3};
        \node[doc](4) at (4,-1){4};
        \node[doc](5) at (5,-2){5};
        \node[doc](6) at (6,-2){6};
        \node[draw,dotted,fit=(0)]{};
        \node[draw,dotted,fit=(1)(4)]{};
        \node[draw,dotted,fit=(2)(6)]{};
        \draw[->](0)--(1);
        \draw[->](0)to[out=-20,in=90](2);
      \end{tikzpicture}
    }}
\end{quote}
then traversing by $+4$ results in:
\begin{quote}
    \raisebox{-.5\height}{
      \includegraphics[width=.45\linewidth]{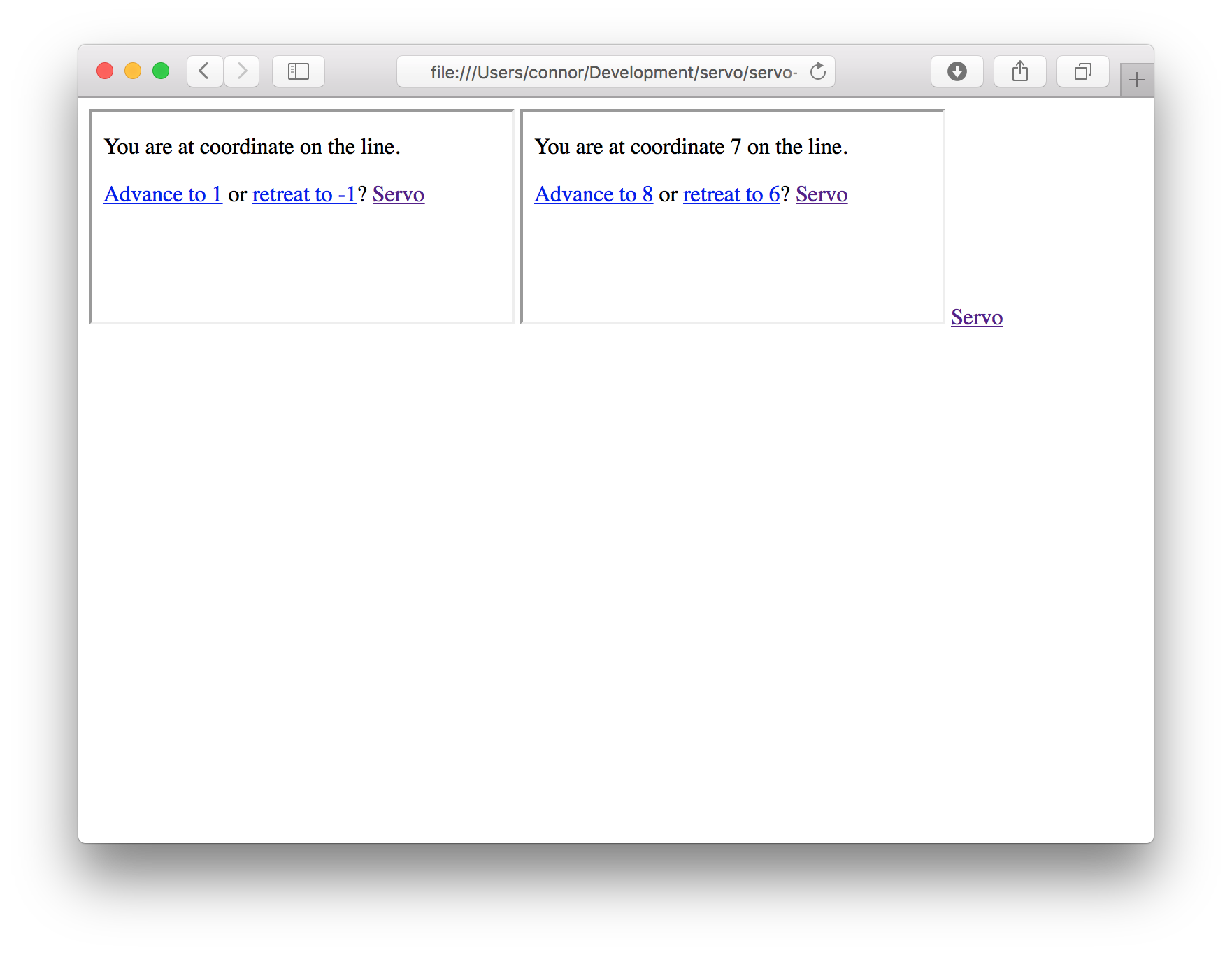}%
    }~\raisebox{-.5\height}{\rlap{
      \begin{tikzpicture}
        \node[doc,active,fully](0) at (0,0){0};
        \node[doc](1) at (1,-1){1};
        \node[doc](2) at (2,-2){2};
        \node[doc](3) at (3,-1){3};
        \node[doc,jshactive,fully](4) at (4,-1){4};
        \node[doc](5) at (5,-2){5};
        \node[doc](6) at (6,-2){6};
        \node[draw,dotted,fit=(0)]{};
        \node[draw,dotted,fit=(1)(4)]{};
        \node[draw,dotted,fit=(2)(6)]{};
        \draw[->](0)to[out=0,in=140](4);
      \end{tikzpicture}
    }}
\end{quote}
After this traversal, we are unable to determine the active entry for one of the \verb|iframe|s
as its state is \verb|null|.

As these examples show, navigation history is difficult to implement: even major
browser implementations give unexpected behaviours when combining separate
\verb|iframe| session histories.

\section{Specification}

In this section, we discuss how the \textsc{whatwg}
specification~\cite[\S7.7.2]{whatwg} can be aligned with the model
from \S\ref{sec:model}. This is not a direct translation, due to some
of the features we elided in our model. In particular, we did not
discuss how documents are \emph{loaded} and \emph{unloaded}, which
includes downloading and allocating resources such as \textsc{html} or
\textsc{css}, and activating JavaScript content. Since
loading-then-unloading a document is wasteful, the specification
should be written to avoid loading intermediate pages when traversing
by a delta. This introduces complexity.

Our first proposed change is that the current specification is defined in terms
of the ``joint session history'' and makes use of the ``current entry
of the joint session history'', neither of which are used by our model.
We propose to remove the definition of ``joint session history''
and ``current entry of the joint session history'', and add the following:
\begin{quote}
  The \textbf{session past} of a browsing context is the entries
  of the session history added before the current entry
  (and does not include the current entry).

  The \textbf{session future} of a browsing context is the entries
  of the session history added after the current entry
  (and does not include the current entry).

  If an entry has a next entry in the chronologically ordered session
  history, it is its \textbf{successor}.

  If an entry has a previous entry in the chronologically ordered session
  history, it is its \textbf{predecessor}.

  The \textbf{joint session past} of a top-level browsing context is the
  union of all the session pasts of all browsing contexts
  that share that top-level browsing context.

  Entries in the joint session past are in decreasing chronological order of
  the time they were added to their respective session histories.

  The \textbf{joint session future} of a top-level browsing context is the
  union of all the session futures of all browsing contexts
  that share that top-level browsing context.

  Entries in the joint session future are in increasing chronological order of
  the time their predecessor were added to their respective session
  histories.
\end{quote}
The second proposed change is to replace the definition of how a user agent
should``traverse the history by a delta'' by the following:
\begin{quote}
  To \textbf{traverse the history by a delta} \emph{delta}, the user
  agent must append a task to this top-level browsing context's session
  history traversal queue, the task consisting of running the following
  steps:
  \begin{enumerate}
  \item Define the \emph{entry sequence}
    as follows:
    \begin{enumerate}

    \item If \emph{delta} is a positive integer $+n$, and if the length of the
      joint session future is less than or equal to $n$ then let the \emph{entry sequence}
      be the first $n$ entries of the joint session future.

    \item If \emph{delta} is a negative integer $-n$, and if the length of the
      joint session past is less than or equal to $n$ then let the \emph{entry sequence}
      be the first $n$ entries of the joint session past.

    \item Otherwise, abort traversing the history by a delta.

    \end{enumerate}

  \item A session entry is said to \textbf{become active} when
    it is a member of the \emph{entry sequence}, and no
    session entry after it in the \emph{entry sequence} has the same
    browsing context.

  \item A session entry is said to \textbf{stay active} when it it the
    current entry of its browsing context, and there are no members of
    the \emph{entry sequence} with the same browsing context.

  \item A session entry is said to be \textbf{activating} when either
    it will become active or stay active.

    \textbf{Note:} the activating documents
    will be active after traversal has finished.

  \item A session entry is said to be \textbf{fully activating} if
    is activating, and either its browsing context is a top-level
    browsing context, or it has a parent browsing context
    and the session entry through which it is nested is itself fully activating.

    \textbf{Note:} the fully activating documents
    will be fully active after traversal has finished.

  \item Queue a task that consists of running the following
    substeps. The relevant event loop is that of the specified
    browsing context's active document. The task source for the queued
    task is the history traversal task source.

    \begin{enumerate}

    \item For each \emph{specified entry} in the \emph{entry sequence},
      run the following substeps.
      \begin{enumerate}

      \item Let \emph{specified browsing context} be the browsing context of the \emph{specified entry}.

      \item If there is an ongoing attempt to navigate \emph{specified
        browsing context} that has not yet matured (i.e. it has not
        passed the point of making its \texttt{Document} the active
        document), then cancel that attempt to navigate the browsing
        context.

      \item If the \emph{specified browsing context}'s active document
        is not the same \texttt{Document} as the \texttt{Document} of
        the specified entry, then run these substeps:
        \begin{enumerate}

        \item Prompt to unload the active document of the
          \emph{specified browsing context}. If the user refused to
          allow the document to be unloaded, then abort these steps.

        \item Unload the active document of the \emph{specified
          browsing context} with the recycle parameter set to false.

        \end{enumerate}

      \item If the \emph{specified entry} is activating but not fully activating,
        then set the current entry of the session history of \emph{specified browsing context}
        to be the \emph{specified entry}.

        \textbf{Note:} in this case, the current entry of the session history should
        be updated, but the document will not be fully active, so should not be loaded.

      \item If the \emph{specified entry} is fully activating, then
        traverse the history of the \emph{specified browsing context}
        to the \emph{specified entry}.

        \textbf{Note:} in this case, the document will be fully active, so should be loaded.

      \end{enumerate}

    \end{enumerate}

  \end{enumerate}
\end{quote}
We believe that these changes bring the specification in line with our model, and
so satisfies the fundamental property of navigation.

\section{Conclusion}

We have proposed a model of web navigation compatible with the
\textsc{whatwg} specification, and investigated its ``fundamental
property'': that traversing by $\delta$ then by $\delta'$ is the same
as traversing by $\delta+\delta'$.  Unfortunately, the specified model
does not satisfy this property, but we have shown that a patched model
does. Experimentally, it appears that the patched model is closer to
the behaviour of existing browser implementations.

\bibliographystyle{plain}
\bibliography{notes}

\begin{thebibliography}{1}

\bibitem{HH:2006}
Minmin Han and Christine Hofmeister.
\newblock Modeling and verification of adaptive navigation in web applications.
\newblock In {\em Proc. Int. Conf. Web Engineering}, pages 329--336, 2006.

\bibitem{Haydar:2004}
May Haydar.
\newblock Formal framework for automated analysis and verification of web-based
  applications.
\newblock In {\em Proc. {IEEE} Int. Conf. Automated Software Engineering},
  pages 410--413, 2004.

\bibitem{HPS:2004}
May Haydar, Alexandre Petrenko, and Houari Sahraoui.
\newblock Formal verification of web applications modeled by communicating
  automata.
\newblock In {\em Proc. Int. Conf. Formal Techniques for Networked and
  Distributed Systems}, pages 115--132, 2004.

\bibitem{LHYT:2000}
Karl R. P.~H. Leung, Lucas Chi~Kwong Hui, S.~M. Yiu, and Ricky W.~M. Tang.
\newblock Modeling web navigation by statechart.
\newblock In {\em Proc. Int. Computer Software and Applications Conf.}, pages
  41--47, 2000.

\bibitem{whatwg}
{WHATWG}.
\newblock {HTML} living standard.
\newblock https://html.spec.whatwg.org/, 2016.

\bibitem{WP:2003}
Marco Winckler and Philippe Palanque.
\newblock {StateWebCharts}: A formal description technique dedicated to
  navigation modelling of web applications.
\newblock In {\em Proc. Int. Workshop Interactive Systems. Design,
  Specification, and Verification}, pages 61--76, 2003.

\bibitem{AgdaProofs}
Mechanical verification of navitation history.
\newblock Agda formalization, 2016.
\newblock https://github.com/ConnorGBrewster/ServoNavigation/tree/master/notes.

\end{thebibliography}

\end{document}